\theoremstyle{definition}
\newtheorem{definition}{Definition}
\newtheorem{theorem}{Theorem}
\newtheorem{proposition}{Proposition}
\newcommand{\sech}{\mathrm{sech} \,}
\title{\vspace*{-1.5cm} \bfseries Fourier spectrum and related characteristics of the fundamental bright soliton solution }
\author{\normalsize Natanael Karjanto\thanks{\Letter: \texttt{natanael@skku.edu} \href{https://orcid.org/0000-0002-6859-447X}{\includegraphics[scale=0.08]{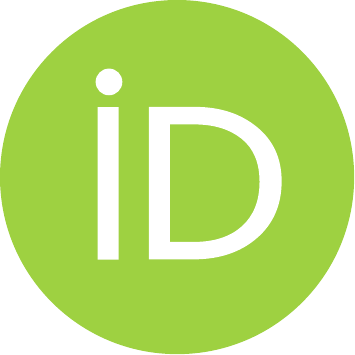}}}}
\affil{Department of Mathematics, University College, Natural Science Campus\\ Sungkyunkwan University, Suwon~16419, Republic of Korea}
\date{\vspace*{-0.5cm} \scriptsize Updated \today}
\begin{document}
\maketitle

\vspace*{-0.5cm}
\begin{abstract}
\noindent
We derive exact analytical expressions for the spatial Fourier spectrum of the fundamental bright soliton solution for the $(1 + 1)$-dimensional nonlinear Schr\"odinger equation. Similar to a Gaussian profile, the Fourier transform for the hyperbolic secant shape is also shape-preserving. We further confirm that the fundamental soliton indeed satisfies essential characteristics such as Parseval's relation and the stretch-bandwidth reciprocity relationship. The fundamental bright solitons find rich applications in nonlinear fiber optics and optical telecommunication systems.\\

\noindent
{\bfseries Keywords:} fundamental bright soliton, nonlinear Schr\"odinger equation, spatial Fourier spectrum, Parseval's relation, stretch-bandwidth reciprocity relationship, optical communication systems. 
\end{abstract}

\section{Introduction}

The nonlinear Schr\"odinger (NLS) equation is a nonlinear evolution equation for slowly varying wave packet envelopes in dispersive media. It belongs to a category of completely integrable systems or exactly solvable models of a nonlinear partial differential equation (PDE), with infinitely many conserved quantities and explicit analytical solutions. The NLS equation finds rich applications in mathematical physics, including surface gravity waves, nonlinear optics, superconductivity, and Bose-Einstein condensates (BEC)~\cite{malomed2005nonlinear,debnath2012nonlinear,karjanto2020nonlinear}. 

The focus of this article is the one-dimensional focusing type of the NLS equation. The former is sometimes also written as $(1 + 1)$-D, where usually, there is only one dimension each in the spatial and temporal variables. Higher-dimensional models usually extend the dimension of the dispersive term. In the canonical form, the focusing $(1 + 1)$-D NLS equation can be expressed as follows~\cite{ablowitz1981solitons,sulem1999nonlinear,fibich2015nonlinear}:
\begin{equation}
i q_t + q_{xx} + 2|q|^2 q = 0, \qquad (x,t) \in \mathbb{R}^2, \qquad q(x,t) \in \mathbb{C}.		\label{NLSmodel}
\end{equation}

In surface gravity waves, the independent variables $(x,t)$ represent spatial and temporal quantities, respectively, whereas in nonlinear optics, $t$ denotes the  transversal pulse propagation in space and $x$ designates the time variable. The focusing type is indicated by the positive product of the dispersive and nonlinear coefficients, which are also known in optical pulse propagation as group-velocity dispersion and self-phase modulation, respectively. In this article, we adopt the physical interpretation from hydrodynamics for the independent variables.

The dependent variable $q(x,t)$ is a complex-valued amplitude. It describes slowly-varying envelope dynamics of the corresponding weakly nonlinear quasi-monochromatic wave packet profile. Denoted by $\eta(\bar{x},\bar{t})$, the usual relationship with the $q$ is given by
\begin{equation*}
\eta(\bar{x},\bar{t}) = \text{Re} \left\{q(x,t) e^{i(k \bar{x} - \omega \bar{t})} \right\},
\end{equation*}
where $x = \varepsilon(\bar{x} - c_g \bar{t})$ and $t = \varepsilon^2 \bar{t}$, with $\varepsilon \ll 1$. Here, $c_g$ denotes the group velocity, and the wavenumber $k$ and wave frequency $\omega$ are related by the linear dispersion relationship for the corresponding medium.

The purpose of this article is to provide a step-by-step explanation for finding the physical spectrum, i.e., the spatial Fourier spectrum, of the bright soliton solution of the NLS equation. The term ``spectrum'' should not be confused with the one used in the well-known inverse scattering transform (IST) technique~\cite{ablowitz1981solitons,ablowitz1991solitons,osborne2010nonlinear}. For working definitions, we adopt the ones from~\cite{pelinovsky2005spectral,karjanto2021spatial}. 

The article is organized as follows. Section~\ref{spafoutra} outlines the proof of the spatial Fourier transform for the fundamental bright soliton. Section~\ref{property} discusses some essential characteristics of the soliton, including Parseval's theorem and the stretch-bandwidth reciprocity relationship. Section~\ref{conclude} concludes our discussion.

\section{Spatial Fourier spectrum}		\label{spafoutra}

For the NLS equation~\eqref{NLSmodel}, the simplest form of the fundamental soliton solution is given by~\cite{agrawal2019nonlinear}
\begin{equation*}
q(x,t) = e^{it} \, \sech x = \frac{e^{it}}{\cosh x}.	\label{bright}
\end{equation*}
Also called a bright soliton, this exact analytic expression was discovered one-half century ago using the IST~\cite{zakharov1972exact}. One may also apply the Darboux transformation using the seed function $q = 0$ to obtain this solution~\cite{matveev1991darboux,trisetyarso2009application}. Without employing the IST, the bright soliton can also be obtained by solving the NLS equation directly by assuming the existence of a shape-preserving solution in the form of the phase and time-independent amplitude~\cite{agrawal2019nonlinear}. The term \emph{bright} soliton is often used in the nonlinear optics literature to distinguish and contrast it with \emph{dark} soliton. The former exists in the anomalous dispersion regime, modeled by the focusing NLS equation~\eqref{NLSmodel}, whereas the latter occurs under the normal dispersion regime, which is governed by the defocusing NLS equation~\cite{banerjee2004nonlinear,li2017nonlinear}.

This fundamental bright soliton solution occurs as limiting cases of a family of stationary periodic wave solutions, another family of periodic solutions, in which both involve Jacobi elliptic functions, and the Kuznetsov-Ma breather family~\cite{akhmediev1997solitons,kuznetsov1977solitons,ma1979perturbed,karjanto2021peregrine}. Although the possibility for the formation of the bright soliton was suggested as early as 1973, it was not until 1980 that its appearance was observed experimentally in optical fibers~\cite{hasegawa1973transmission1,hasegawa1973transmission2,mollenauer1980experimental,taylor1992optical}. Indeed, this fundamental soliton has a remarkable feature that allures it for practical application, i.e., if a hyperbolic secant pulse is initiated inside an ideal lossless fiber, it would propagate without altering its shape for arbitrarily long-distance~\cite{nrc1997nonlinear,kath1998making,christiansen2000nonlinear,millot2005solitons}. In BEC, the formation of matter-wave bright solitons was observed in 2002~\cite{khaykovic2002formation,strecker2002formation}.

For $a$, $k$, and $\lambda \in \mathbb{R} \backslash \{0\}$, the following transformations of the fundamental bright soliton~\eqref{bright} will also satisfy the NLS equation~\eqref{NLSmodel}~\cite{dysthe1999note}:
\begin{align*}
q_1(x,t) &= a q(ax, at^2), \\
q_2(x,t) &= q(x - 2kt, t) e^{i \left(kx - k^2 t + \lambda \right)}, \\
q_3(x,t) &= a q(a(x - 2kt), at^2) e^{i \left(ak x - a k^2 t^2 + \lambda \right)}.
\end{align*}
These three arbitrary parameters characterize and are related to the amplitude, frequency, and phase of the soliton, respectively. The fourth parameter is absent and might always be re-introduced if one wishes to indicate the position of the soliton peak. However, this is inessential as we can always shift it to $x = 0$ when $t = 0$~\cite{agrawal2019nonlinear}.

We have the following theorem.
\begin{theorem}
The spatial Fourier spectrum of the bright NLS soliton in its simplest form~\eqref{bright} is given by
\begin{equation*}
\widehat{q}(k,t) = \pi \, \sech \left(\frac{\pi}{2} k \right) e^{it}.
\end{equation*}
\end{theorem}

\begin{proof}
Observe that if we let 
\begin{equation*}
f(x) = \sech x = \frac{1}{\cosh x} = \frac{2}{e^x + e^{-x}} = \frac{2 e^{-x}}{1 + e^{-2x}},
\end{equation*}
then $f$ is an even function and 
\begin{equation*}
\left|f(x) \right| = \frac{2 e^{-|x|}}{1 + e^{-2|x|}} \leq 2 e^{-|x|} \in L_1(\mathbb{R}). 
\end{equation*}

We are interested in finding the spatial Fourier spectrum of the bright soliton $q(x,t)$.
\begin{equation*}
\widehat{q}(k,t) = \int_{-\infty}^{\infty} \sech x e^{-i(k x - t)} \, dx
= e^{it} \left(\int_{-\infty}^{\infty} \frac{\cos kx}{\cosh x} \, dx - i \int_{-\infty}^{\infty} \frac{\sin kx}{\cosh x} \, dx \right).
\end{equation*}
Since the second term inside the bracket of the last expression vanishes, we only need to evaluate
\begin{equation*}
\int_{-\infty}^{\infty} \frac{\cos kx}{\cosh x} \, dx = \lim\limits_{R \to \infty} \int_{-R}^{R} \frac{2\cos kx}{e^{x} + e^{-x}} \, dx, \qquad R > 0.
\end{equation*}
Consider the complex-valued function
\begin{equation}
\phi(z) = \frac{2\cos kz}{e^{z} + e^{-z}}, \qquad z \in \mathbb{C},		\label{phi}
\end{equation}
and the rectangular contour $C_R$ in the complex plane with corners at $\pm R$ and $\pm R + i\pi$, with $R > 0$, as shown in Figure~\ref{contour}.
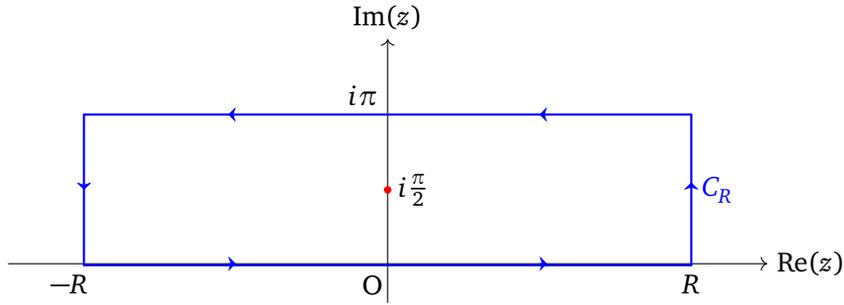
\begin{figure}[htbp]
\begin{center}
\begin{tikzpicture}
\draw[->] (-5,-0.98)--(5,-0.98) node[right]{Re($z$)};
\draw[->] (0,-1.5)--(0,2) node[above]{Im($z$)};
\node[draw,thick,rectangle,minimum width=8cm,minimum height=2cm,color=blue](r){};
\draw[-{Stealth[length=1mm,width=2mm]},color=blue] (4,0)--(4,0.1);
\draw[-{Stealth[length=1mm,width=2mm]},color=blue] (-4,0.1)--(-4,0);
\draw[-{Stealth[length=1mm,width=2mm]},color=blue] (2.1,1)--(2,1);
\draw[-{Stealth[length=1mm,width=2mm]},color=blue] (-2,1)--(-2.1,1);
\draw[-{Stealth[length=1mm,width=2mm]},color=blue] (-2.1,-0.98)--(-2,-0.98);
\draw[-{Stealth[length=1mm,width=2mm]},color=blue] (2,-0.98)--(2.1,-0.98);
\node[below] at (-0.2,-1) {O};
\node[below] at (4,-1) {$R$};
\node[below] at (-4.2,-1) {$-R$};
\node[left] at (0,1.25) {$i \pi$};
\node[circle, fill, inner sep=1pt, color=red] at (0,0) {};
\node[right] at (0,0) {$i\frac{\pi}{2}$};
\node[right,color=blue] at (4,0) {$C_{R}$};
\end{tikzpicture}
\end{center}
\caption{A rectangular contour $C_{R}$ with corners $\pm R$ and $\pm R + i\pi$ determined by a closed, simple, piecewise smooth curve traversed in positive orientation. The function $\phi$ is holomorphic in an open connected domain containing the interior of $C_R$ and its closure, except at the pole $z = i \frac{\pi}{2}$.}		\label{contour}
\end{figure}

The function $\phi(z)$~\eqref{phi} has only one pole inside the region bounded by the rectangular contour $C_R$ at $z = i \frac{\pi}{2}$. Hence, according to the consequence of the Residue Theorem~\cite{howie2003complex,brown2014complex,ablowitz2021introduction}, we can calculate that the contour integral over $C_R$ is given by
\begin{equation}
\int_{C_R} \phi(z) \, dz = 2\pi i \; \text{res}\left(\phi, i \frac{\pi}{2} \right) 
= 2 \pi i \; \frac{2 \cos \left(k i \frac{\pi}{2}\right)}{e^{i \frac{\pi}{2}} - e^{-i \frac{\pi}{2}}}
= 2 \pi i \; \frac{2 \cosh\left(\frac{\pi}{2} k \right)}{2 i \sin \frac{\pi}{2}}
= 2 \pi \cosh\left(\frac{\pi}{2} k \right). 		\label{contourintegral1}
\end{equation}
On the other hand, this very same contour integral can also be expressed as separate four distinct line integrals along each side of the rectangle. Hence, we have
\begin{align}
\int_{C_R} \phi(z) \, dz 
&= \int_{-R}^{R} \frac{2\cos kx}{e^{x} + e^{-x}} \, dx 
+ \int_{0}^{\pi} \frac{2\cos k(R + iy)}{e^{R + iy} + e^{-R - i y}} \, i \, dy
- \int_{-R}^{R} \frac{2\cos k(x + \pi \, i)}{e^{x + \pi \, i} + e^{-x - \pi \, i}} \, dx \nonumber \\
& \quad - \int_{0}^{\pi} \frac{2\cos k(-R + iy)}{e^{-R + iy} + e^{R - i y}} \, i \, dy.		\label{contourintegral2}
\end{align}
Now observe that the integrand of the second integral can be expressed as follows for sufficiently large $R$:
\begin{align*}
\left|\frac{2\cos k(R + iy)}{e^{R + iy} + e^{-R - i y}} \right| 
&= \left|\frac{e^{i k R} \, e^{-ky} + e^{-i k R} \, e^{ky}}{e^{R + iy} + e^{-R - i y}} \right|
= \left|\frac{e^{(1 + ik)R} \, e^{(-k + i)y} + e^{(1 - ik)R} \, e^{(k + i)y}}{e^{2(R + iy)} + 1} \right|  \\
&\leq \left|\frac{e^{(1 + ik)R} \, e^{(-k + i)y}}{e^{2(R + iy)} + 1} \right| + \left| \frac{e^{(1 - ik)R} \, e^{(k + i)y}}{e^{2(R + iy)} + 1} \right|  
\leq \frac{e^{R - ky}}{e^{2R} - 1} + \frac{e^{R + ky}}{e^{2R} - 1}
= \frac{2 e^R \, \cosh ky}{e^{2R} - 1} \\
&\leq 4 e^{-R} \cosh ky.
\end{align*}
Thus, the absolute value of the second integral reads
\begin{align*}
\left| \int_{0}^{\pi} \frac{2\cos k(R + iy)}{e^{R + iy} + e^{-R - i y}} \, i \, dy \right|
\leq \int_{0}^{\pi} \left| \frac{2\cos k(R + iy)}{e^{R + iy} + e^{-R - i y}} \, i \right| \, dy
\leq \int_{0}^{\pi} 4 e^{-R} \cosh ky \, dy = 4 e^{-R} \sinh k\pi,
\end{align*}
which tends to $0$ as $R \to \infty$. Similarly, for sufficiently large $R$, the function of the fourth integral satisfies the following relationship:
\begin{align*}
\left|\frac{2\cos k(-R + iy)}{e^{-R + iy} + e^{R - i y}} \right| 
&= \left|\frac{e^{-i k R} \, e^{-ky} + e^{i k R} \, e^{ky}}{e^{-R + iy} + e^{R - i y}} \right|
= \left|\frac{e^{-(1 + ik)R} \, e^{(-k + i)y} + e^{(-1 + ik)R} \, e^{(k + i)y}}{1 + e^{-2(R - iy)}} \right|  \\
& \leq \frac{\left|e^{-(1 + ik)R} \, e^{(-k + i)y} \right| + \left|+ e^{(-1 + ik)R} \, e^{(k + i)y} \right|}{\left|1 + e^{-2(R - iy)} \right|}
\leq \frac{e^{-R - ky} + e^{-R + ky}}{1 - e^{-2R}} = \frac{2e^{-R} \cosh ky}{1 - e^{-2R}} \\
& \leq 2e^{-R} \cosh ky.
\end{align*} 
So, the absolute value of the fourth integral becomes
\begin{align*}
\left| \int_{0}^{\pi} \frac{2\cos k(-R + iy)}{e^{-R + iy} + e^{R - i y}} \, i \, dy \right|
\leq \int_{0}^{\pi} \left| \frac{2\cos k(-R + iy)}{e^{-R + iy} + e^{R - i y}} \, i \right| \, dy
\leq \int_{0}^{\pi} 2 e^{-R} \cosh ky \, dy = 2 e^{-R} \sinh k\pi,
\end{align*}
which again tends to $0$ as $R \to \infty$. Considering the third integral, it follows that
\begin{align*}
-\int_{-R}^{R} \frac{2\cos k(x + \pi \, i)}{e^{x + \pi \, i} + e^{-x - \pi \, i}} \, dx
&= \int_{-R}^{R} \frac{e^{ikx} e^{-\pi k} + e^{-ikx} e^{\pi k}}{e^{x} + e^{-x}} \, dx 
= \int_{-R}^{R} \frac{\cos kx \, \cosh \pi k}{\cosh x} \, dx - i \int_{-R}^{R} \frac{\sin kx \, \sinh \pi k}{\cosh x} \, dx \\
&=  \cosh \pi k \int_{-R}^{R} \frac{\cos kx }{\cosh x} \, dx,
\end{align*}
where the imaginary component of the integral vanishes since $\sin(kx)$ is an odd function. Hence, letting $R \to \infty$, we observe from~\eqref{contourintegral1} and~\eqref{contourintegral2} that
\begin{equation*}
\lim\limits_{R \to \infty} \left(1 + \cosh \pi k \right) \int_{-R}^{R} \frac{\cos kx}{\cosh x} \, dx 
= \left(1 + \cosh \pi k \right) \int_{-\infty}^{\infty} \frac{\cos kx}{\cosh x} \, dx 
= 2 \pi \cosh \left(\frac{\pi}{2} k \right).
\end{equation*}
Therefore,
\begin{align*}
\int_{-\infty}^{\infty} \frac{\cos kx}{\cosh x} \, dx 
&= \frac{2 \pi \cosh \left(\frac{\pi}{2} k \right)}{1 + \cosh \pi k}
 = \frac{2 \pi \cosh \left(\frac{\pi}{2} k \right)}{2 \cosh^2 \left(\frac{\pi}{2} k \right)}
 = \frac{\pi}{\cosh \left(\frac{\pi}{2} k \right)}
 = \pi \, \sech \left(\frac{\pi}{2} k\right),
\end{align*}
and thus, the spatial Fourier transform for the bright soliton reads
\begin{equation*}
\widehat{q}(k,t) = \pi \, \sech \left(\frac{\pi}{2} k\right) e^{it}.
\end{equation*}
This completes the proof.
\end{proof}
\begin{figure}[h]
\begin{center}
\includegraphics[width=0.45\textwidth]{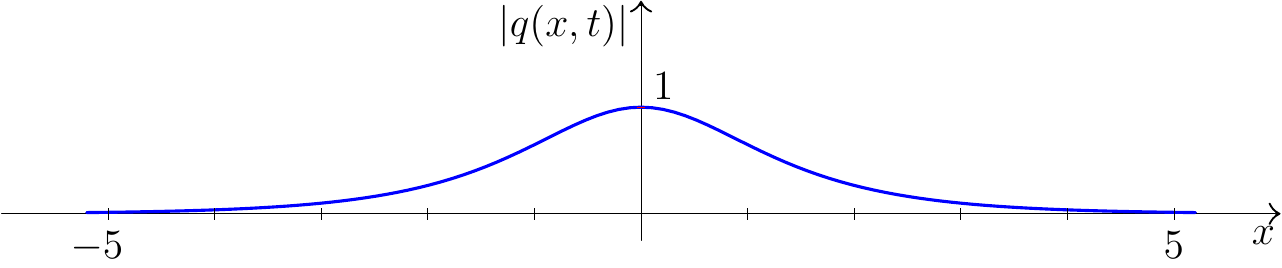}	\hspace*{1cm}
\includegraphics[width=0.45\textwidth]{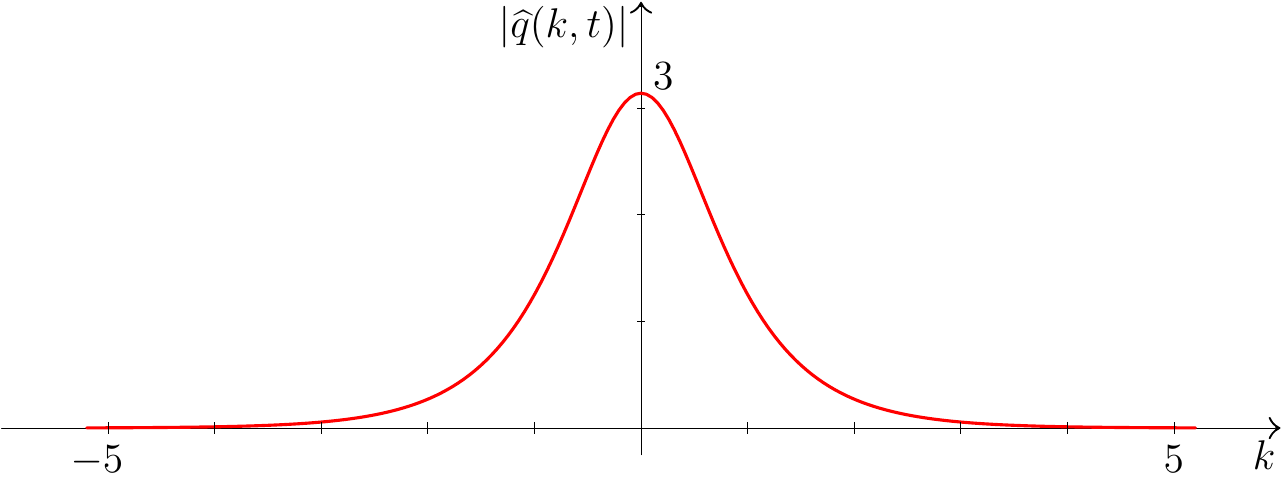}
\end{center}
\caption{The moduli of the NLS bright soliton (left panel, blue curve) and its corresponding spectrum (right panel, red curve). Although both profiles are secant hyperbolic, the former is wider and shorter, whereas the latter is narrower and taller. Both plots are depicted on the same axis scale.}		\label{profile}
\end{figure}
A similar approach in calculating the bright soliton spectrum yields an expression of convergent alternating series form, where the integral in the associated Fourier transform is expressed as Gauss' hypergeometric function~\cite{karjanto2006thesis}:
\begin{equation*}
\widehat{q}(k,t) = 4 \, e^{it} \sum_{n = 0}^{\infty} (-1)^n \frac{2n + 1}{k^2 + (2n + 1)^2}.
\end{equation*}

\section{Soliton characteristics}	\label{property}
In this section, we examine some characteristics of the fundamental bright soliton and its corresponding spatial Fourier spectrum. While the literature usually considers the temporal-frequency domains relationship, i.e., pulse or signal (together with its envelope) in the time domain and its temporal Fourier spectrum in the frequency domain, we focus on the relationship of the spatial-wavenumber domains. Hence, some terminologies in the definition are adjusted to the related domains accordingly. We commence with the following definitions and propositions.

\begin{definition}[Soliton power and energy spectral density]
The squared-absolute value of the fundamental bright soliton $\left|q(x,t)\right|^2$ is called the \emph{(envelope) soliton power}, whereas the squared-modulus of its spatial Fourier spectrum $\left|\widehat{q}(k,t)\right|^2$ is called the \emph{energy spectral density}.
\end{definition}

Figure~\ref{profile} displays the envelope power of the bright soliton (left panel) and its associated energy spectral density (right panel) for a fixed value of $t \in \mathbb{R}$. We observe that the Fourier transform preserves the hyperbolic secant profile albeit with different height and width between the spatial and wavenumber domains.

\begin{proposition}		\label{parsev}
The fundamental bright soliton satisfies Parseval's Theorem in the spatial-wavenumber domains~\cite{oppenheim1997signals}:
\begin{equation*}
\int_{-\infty}^{\infty} \left|q(x,t) \right|^2 \, dx = \frac{1}{2\pi} \int_{-\infty}^{\infty} \left|\widehat{q}(k,t) \right|^2 \, dk.
\end{equation*}	
\end{proposition}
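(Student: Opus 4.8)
The plan is to establish the identity by direct evaluation of both integrals, using the closed-form expressions already in hand. Since $q(x,t)=e^{it}\,\sech x$ and the phase $e^{it}$ has unit modulus, we have $\left|q(x,t)\right|^2 = 1/\cosh^2 x$; likewise, from the Theorem, $\widehat{q}(k,t)=\pi\,\sech\!\left(\frac{\pi}{2}k\right)e^{it}$ gives $\left|\widehat{q}(k,t)\right|^2 = \pi^2/\cosh^2\!\left(\frac{\pi}{2}k\right)$. In particular the time dependence cancels on both sides, so it suffices to compare two time-independent integrals of squared hyperbolic secants.

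First I would evaluate the left-hand side. Because $\frac{d}{dx}\tanh x = 1/\cosh^2 x$, the antiderivative is immediate and
$$\int_{-\infty}^{\infty}\frac{dx}{\cosh^2 x} = \Big[\tanh x\Big]_{-\infty}^{\infty} = 1-(-1) = 2 .$$

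Next I would evaluate the right-hand side. The substitution $u=\frac{\pi}{2}k$, so that $dk=\frac{2}{\pi}\,du$, reduces the wavenumber integral to the one just computed,
$$\frac{1}{2\pi}\int_{-\infty}^{\infty}\frac{\pi^2}{\cosh^2\!\left(\frac{\pi}{2}k\right)}\,dk = \frac{\pi^2}{2\pi}\cdot\frac{2}{\pi}\int_{-\infty}^{\infty}\frac{du}{\cosh^2 u} = \frac{\pi^2}{2\pi}\cdot\frac{2}{\pi}\cdot 2 = 2 ,$$
which agrees with the left-hand side and proves the claim.

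I expect no genuine obstacle here; the single point requiring care is the bookkeeping of constants, since the factor $\pi^2$ from $\left|\widehat{q}\right|^2$, the normalization $\frac{1}{2\pi}$ in Parseval's relation, and the Jacobian $\frac{2}{\pi}$ from the substitution must cancel exactly to reproduce the common value $2$. As an alternative one could appeal to the general $L^2$ form of Parseval's theorem, valid because $\sech x\in L^2(\mathbb{R})$; however, since the spectrum is known explicitly, the direct computation above is both shorter and serves as an independent consistency check on the Theorem.
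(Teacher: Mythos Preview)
Your proof is correct and follows essentially the same approach as the paper: both sides are evaluated directly using the antiderivative $\tanh$ of $\sech^2$, yielding the common value $2$. The only cosmetic difference is that you use the substitution $u=\tfrac{\pi}{2}k$ to reduce the right-hand integral to the left-hand one, whereas the paper integrates it in place; the content is the same.
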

\begin{proof}
The left-hand side reads
\begin{align*}
\int_{-\infty}^{\infty} \left|q(x,t) \right|^2 \, dx 
&= \int_{-\infty}^{\infty} \sech^2 x \, dx 
 = \lim\limits_{c \to \infty} \int_{-c}^{c} \sech^2 x \, dx 
 = \lim\limits_{c \to \infty}  \tanh x \, \Big|_{-c}^c \\ 
&= \lim\limits_{c \to \infty}  \left[\tanh c - \tanh (-c) \right] 
 = \left[1 - (-1) \right] = 2.
\end{align*}
The right-hand side calculates
\begin{align*}
\frac{1}{2 \pi}\int_{-\infty}^{\infty} \left|\widehat{q}(k,t) \right|^2 \, dk 
&= \frac{1}{2 \pi} \int_{-\infty}^{\infty} \pi^2 \sech^2 \left(\frac{\pi}{2} k \right) \, dk 
 = \frac{\pi}{2} \lim\limits_{d \to \infty} \int_{-d}^{d} \sech^2 \left(\frac{\pi}{2} k \right) \, dk 
 = \lim\limits_{d \to \infty}  \tanh \left(\frac{\pi}{2} k \right) \Bigg|_{-d}^d \\
&= \lim\limits_{d \to \infty}  \left[\tanh \left(\frac{\pi}{2} d \right) - \tanh \left(-\frac{\pi}{2} d \right) \right] 
 = \left[1 - (-1) \right] = 2.
\end{align*}
Since we verify that both sides equal to two, we complete the proof.
\end{proof}
\begin{definition}[Mean/centroid]	\label{centro}
Let $q(x,t)$ and $\widehat{q}(k,t)$ be a complex-valued soliton amplitude and its associated spatial Fourier spectrum, respectively.	
Then, the \emph{soliton mean} or \emph{centroid} is defined by
\begin{equation*}
\bar{x} = \frac{\displaystyle \int_{-\infty}^{\infty} x \left|q(x,t)\right|^2\, dx}{\displaystyle \int_{-\infty}^{\infty} \left|q(x,t)\right|^2\, dx},
\end{equation*}
whereas the (spatial) \emph{spectrum centroid} or \emph{mean} is defined as
\begin{equation*}
\bar{k} = \frac{\displaystyle \int_{-\infty}^{\infty} k \left|\widehat{q}(k,t)\right|^2 \, dk}{\displaystyle \int_{-\infty}^{\infty} \left|\widehat{q}(k,t)\right|^2 \, dk}.
\end{equation*}
\end{definition}
\begin{definition}[Power-root-mean-square soliton width and spectral bandwidth]
Let $q(x,t)$ and $\widehat{q}(k,t)$ be a complex-valued soliton amplitude and its associated spatial Fourier spectrum, respectively.  
The \emph{power-root-mean-square soliton width} $\sigma_x \geq 0$, also called the \emph{soliton standard deviation} or \emph{soliton radius of gyration}, is defined as
\begin{equation*}
\sigma_{x}^2 = \frac{\displaystyle \int_{-\infty}^{\infty} (x - \bar{x})^2 \left|q(x,t)\right|^2 \, dx}{\displaystyle \int_{-\infty}^{\infty} \left|q(x,t)\right|^2 \, dx},
\end{equation*}
where as the (spatial) \emph{power-root-mean-square spectral bandwidth} $\sigma_{k} \geq 0$ is defined by
\begin{equation*}
\sigma_{k}^2 = \frac{\displaystyle \int_{-\infty}^{\infty} (k - \bar{k})^2 \left|\widehat{q}(k,t)\right|^2 \, dk}{\displaystyle \int_{-\infty}^{\infty} \left|\widehat{q}(k,t)\right|^2 \, dk}.
\end{equation*}
In both instances, $\bar{x}$ and $\bar{k}$ denote the soliton and spectrum centroids mentioned earlier in Definition~\ref{centro}, respectively.
\end{definition}
\begin{proposition}
The product of the power-root-mean-square soliton width with its associated spectral bandwidth satisfies the (spatial) \emph{stretch-bandwidth reciprocity relationship}, also called the \emph{space-wavenumber uncertainty relationship}~\cite{hirlimann2005pulsed,saleh2007fundamentals}:
\begin{equation*}
\sigma_x \, \sigma_{k} \geq \frac{1}{2}.
\end{equation*}
\end{proposition}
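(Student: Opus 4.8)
The plan is to evaluate $\sigma_x$ and $\sigma_k$ explicitly and show that their product equals a single constant strictly larger than $\frac{1}{2}$; since the claim is a clean inequality, exhibiting this constant settles it. The first observation is that both centroids vanish: the integrands $x\,|q(x,t)|^2 = x\,\sech^2 x$ and $k\,|\widehat{q}(k,t)|^2 = \pi^2 k\,\sech^2\!\left(\frac{\pi}{2}k\right)$ are odd functions integrated over the symmetric line $\mathbb{R}$, so $\bar{x} = \bar{k} = 0$. This collapses both variance formulas into pure second moments, and their denominators are already known from the proof of Proposition~\ref{parsev}, namely $\int_{-\infty}^{\infty} |q|^2\,dx = 2$ and $\int_{-\infty}^{\infty} |\widehat{q}|^2\,dk = 4\pi$.

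Next I would reduce everything to one integral. Set $I := \int_{-\infty}^{\infty} x^2\,\sech^2 x\,dx$, so that $\sigma_x^2 = I/2$. For the bandwidth, the substitution $u = \frac{\pi}{2}k$ turns $\int_{-\infty}^{\infty} k^2\,\sech^2\!\left(\frac{\pi}{2}k\right)dk$ into $\frac{8}{\pi^3}\,I$, whence the numerator of $\sigma_k^2$ is $\pi^2\cdot\frac{8}{\pi^3}I = \frac{8I}{\pi}$ and $\sigma_k^2 = \frac{8I/\pi}{4\pi} = \frac{2I}{\pi^2}$. The product then takes the pleasingly compact form $\sigma_x\,\sigma_k = \sqrt{\frac{I}{2}\cdot\frac{2I}{\pi^2}} = \frac{I}{\pi}$, so the entire proposition hinges on the value of $I$ alone.

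The main obstacle is therefore the evaluation of $I$, which is the one genuinely nontrivial step. I expect to handle it by integration by parts, using $\frac{d}{dx}\tanh x = \sech^2 x$ together with the identity $1 - \tanh x = \frac{2}{e^{2x}+1}$. Writing $I = 2\int_0^\infty x^2\,\sech^2 x\,dx$ and integrating by parts, the boundary term leaves the surviving piece $c^2(\tanh c - 1)$, which decays exponentially as $c\to\infty$, and the problem reduces to $\int_0^\infty \frac{x}{e^{2x}+1}\,dx = \frac{\pi^2}{48}$, a standard Dirichlet-type integral. This gives $\int_0^\infty x^2\,\sech^2 x\,dx = \frac{\pi^2}{12}$ and hence $I = \frac{\pi^2}{6}$. (Equivalently, expanding $\sech^2 x = 4\sum_{m\geq 1}(-1)^{m-1} m\,e^{-2mx}$ and integrating term by term yields the alternating series $\sum_{m\geq 1}(-1)^{m-1}/m^2 = \frac{\pi^2}{12}$ directly.)

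Substituting back gives $\sigma_x\,\sigma_k = \frac{I}{\pi} = \frac{\pi}{6} \approx 0.5236$, which exceeds $\frac{1}{2}$ and proves the inequality. It is worth remarking that the bound is strict, with the small surplus $\frac{\pi}{6} - \frac{1}{2}$ reflecting the general fact that equality in the space--wavenumber uncertainty relationship is attained only by a Gaussian profile; the hyperbolic secant, though shape-preserving under the Fourier transform, is not quite minimal.
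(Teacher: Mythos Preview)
Your argument is correct and reaches the same conclusion $\sigma_x\,\sigma_k=\frac{\pi}{6}$ as the paper. The overall architecture coincides with the paper's: both note that the centroids vanish by oddness, read off the denominators $2$ and $4\pi$ from Proposition~\ref{parsev}, and then compute the two second moments. Where you diverge is in the execution of the integrals. The paper evaluates $\int_{-\infty}^{\infty} x^2\,\sech^2 x\,dx$ and $\int_{-\infty}^{\infty} k^2\,\sech^2\!\left(\tfrac{\pi}{2}k\right)dk$ separately, producing explicit antiderivatives in terms of the dilogarithm $\text{Li}_2$ and then taking limits. You instead observe that the scaling $u=\tfrac{\pi}{2}k$ collapses the second integral onto the first, so that only the single quantity $I$ needs to be computed, and you evaluate $I$ by an elementary integration by parts against $\tanh x-1$ together with the Dirichlet-type identity $\int_0^\infty \frac{x}{e^{2x}+1}\,dx=\tfrac{\pi^2}{48}$. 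Your route is lighter---it avoids special functions entirely and makes the identity $\sigma_x\,\sigma_k=I/\pi$ transparent---while the paper's antiderivative in terms of $\text{Li}_2$ has the modest advantage of giving a closed-form primitive valid on finite intervals. Your closing remark about strictness and the Gaussian minimizer also matches the paper's.
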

\begin{proof}
Since $x \left|q(x,t)\right|^2 = x \, \sech^2 x$ and $k \left|\widehat{q}(k,t)\right|^2 = \pi^2 \, k \, \sech^2 \left(\frac{\pi}{2} k \right)$ are both odd functions with respect to the length $x$ and wavenumber $k$, respectively, then the numerators of both the soliton's and spectrum's means vanish, i.e., $\bar{x} = 0 = \bar{k}$. From Proposition~\ref{parsev}, we calculated the denominators of $\sigma_{x}^2$ and $\sigma_{k}^2$ as $2$ and $4\pi$, respectively. It then remains to evaluate their numerators. The former reads
\begin{align*}
\int_{-\infty}^{\infty} (x - \bar{x})^2 \left|q(x,t)\right|^2 \, dx
&= \int_{-\infty}^{\infty} x^2 \, \sech^2 x \, dx
 = \lim\limits_{c \to \infty} \int_{-c}^{c} x^2 \, \sech^2 x \, dx \\
&= \lim\limits_{c \to \infty}  \text{Li}_2(-e^{-2x}) + x^2 \left(\tanh x - 1\right) - 2x \ln (1 + e^{-2x}) \, \Big|_{-c}^c 
 = \frac{\pi^2}{6}.
\end{align*}
Thus, the soliton variance $\sigma_{x}^2$ and its power-root-mean-square width $\sigma_{x}$ are given by
\begin{equation*}
\sigma_{x}^2 = \frac{\pi^2}{12} \qquad \text{and} \qquad \sigma_{x} = \frac{\pi}{2 \sqrt{3}}.
\end{equation*}
Similarly, we evaluate
\begin{align*}
\int_{-\infty}^{\infty} (k - \bar{k})^2 \left|\widehat{q}(k,t)\right|^2 \, dk
&= \int_{-\infty}^{\infty} \pi^2 \, k^2 \, \sech^2 \left(\frac{\pi}{2} k \right) \, dk
= \pi^2 \lim\limits_{d \to \infty} \int_{-d}^{d} k^2 \, \sech^2 \left(\frac{\pi}{2} k \right) \, dk \\
&= \frac{1}{\pi} \lim\limits_{d \to \infty}  8 \text{Li}_2(-e^{-\pi k}) + \pi^2 k^2 \left[\tanh \left(\frac{\pi}{2} k \right) - 2 \right] - 8 \pi k \ln (1 + e^{-\pi k}) \, \Big|_{-d}^d 
= \frac{4 \pi}{3}.
\end{align*}
It follows that the spectrum variance $\sigma_{k}^2$ and its power-root-mean-square spectral bandwidth $\sigma_{k}$ are given as follows:
\begin{equation*}
\sigma_{k}^2 = \frac{1}{3} \qquad \text{and} \qquad \sigma_{k} = \frac{1}{\sqrt{3}}.
\end{equation*}
In both instances, Li$_2(z)$ denotes the dilogarithm Spence's function, defined as
\begin{equation*}
\text{Li}_2(z) = - \int_{0}^{z} \frac{\ln (1 - u)}{u} \, du \qquad \text{or} \qquad
\text{Li}_2(z) =   \int_{1}^{z} \frac{\ln t}{1 - t} \, dt, \qquad z \in \mathbb{C}.
\end{equation*} 
Hence, the product of the two widths yields
\begin{equation*}
\sigma_{x} \, \sigma_{k} = \frac{\pi}{6} \geq \frac{1}{2},
\end{equation*}
which satisfies the stretch-bandwidth reciprocity relationship and completes the proof.
\end{proof}
It can also be verified that the equality in the stretch-bandwidth reciprocity relationship is satisfied for the Gaussian pulse profile. In other words, the Gaussian function possesses the minimum permissible value of the stretch-bandwidth product, i.e., the minimum uncertainty in the context of Heisenberg's uncertainty principle in quantum mechanics.

\section{Conclusion}	\label{conclude}

In this article, we have considered the spatial Fourier spectrum for the fundamental bright soliton solution of the NLS equation. Deriving the analytical expression of the spectrum requires performing integration in the complex plane. Interestingly, the bright soliton has a similar characteristic of hyperbolic secant profiles in both the spatial and wavenumber domains. The dynamics of the fundamental bright soliton is a topic of ongoing interest, thanks to its applications in optical telecommunication systems and nonlinear fiber optics.

\subsection*{Acknowledgment}
The author gratefully acknowledges E. (Brenny) van Groesen and Andonowati for the long-lasting guidance and concerted cultivation in thinking and growing up mathematically.

\subsection*{Conflict of Interest}
The author declares no conflict of interest.

{\small

\par}
\end{document}